\newtheorem{theorem}{\bf Theorem}
\newtheorem{proposition}{\bf Proposition}
\begin{document}
	
	\title{\vspace{-0mm}Efficient Deployment of Multiple Unmanned Aerial Vehicles for Optimal Wireless Coverage}

	\author{\IEEEauthorblockN{Mohammad Mozaffari$^1$, Walid Saad$^1$, Mehdi Bennis$^2$, and M\'erouane Debbah$^3$}\vspace{0.1cm}\\
		\IEEEauthorblockA{
			\small $^1$ Wireless@VT, Electrical and Computer Engineering Department, Virginia Tech, VA, USA, Emails: \url{{mmozaff , walids}@vt.edu} \\
			$^2$ CWC - Centre for Wireless Communications, Oulu, Finland, Email: \url{bennis@ee.oulu.fi}\\
			$^3$ Mathematical and Algorithmic Sciences Lab, Huawei France R \& D, Paris, France, and CentraleSup´elec,\\   Universit´e Paris-Saclay, Gif-sur-Yvette, France, Email: \url{merouane.debbah@huawei.com}
		}\vspace{-0.9cm}}
\maketitle
\begin{abstract}
	
 In this paper, the efficient deployment of multiple unmanned aerial vehicles (UAVs) with directional antennas acting as wireless base stations that provide coverage for ground users is analyzed. First, the downlink coverage probability for UAVs as a function of the altitude and the antenna gain is derived. Next, using circle packing theory, the three-dimensional locations of the UAVs is determined in a way that the total coverage area is maximized while maximizing the coverage lifetime of the UAVs. Our results show that, in order to mitigate interference, the altitude of the UAVs must be properly adjusted based on the beamwidth of the directional antenna as well as coverage requirements. Furthermore, the minimum number of UAVs needed to guarantee a target coverage probability for a given geographical area is determined. Numerical results evaluate the various tradeoffs involved in various UAV deployment scenarios. 
\end{abstract} \vspace{-0.8cm}

\section{Introduction}\vspace{-0.1cm}
The use of wireless aerial platforms is a promising approach to improve the performance of wireless communication networks. In particular, unmanned aerial vehicles (UAVs) can act as flying base stations to enhance the coverage and rate performance of wireless networks in different scenarios such as temporary hotspots and emergency situations \cite{Mozaffari}. In addition, mobile UAVs can establish efficient communications with ground sensors for alarm messages delivery scenarios \cite{de2013}.  
 Indeed, using UAVs as aerial base stations provides several advantages compared to the terrestrial base stations. First, due to their higher altitude, aerial base stations have a higher chance of line-of-sight (LoS) links to ground users. Second, UAVs can easily move and have a flexible deployment, and hence, they can provide rapid, on-demand communications. Finally, using directional antennas, one can further enhance the UAV-based communications due to the possibility of using effective beamforming schemes \cite{bekmezci}.

Despite the numerous advantages for using UAVs as flying base stations, one must overcome a number of technical challenges. These challenges include the optimal 3D deployment of UAVs, energy limitations, interference management, and path planning \cite{Mozaffari,de2013 ,bekmezci,mozaffari2, orfanus}. In particular, the deployment problem is of paramount importance as it highly impacts the energy consumption as well as the interference generated by UAVs. However, only a limited number of existing literature have addressed the interplay between UAV deployment and wireless performance \cite{ Mozaffari, mozaffari2, orfanus, kosmerl}.  
 For instance, in \cite{orfanus}, the use of multiple UAVs as wireless relays in order to provide service for ground sensors is investigated. This work addressed the tradeoff between connectivity among the UAVs and maximizing the area covered by the UAVs. However, the work in \cite{orfanus} does not consider the use of UAVs as aerial base stations and their mutual interference in downlink communications. In  \cite{kosmerl}, the authors used evolutionary algorithms in order to find the optimal placement of low altitude platforms (LAPs) and portable base stations for disaster relief scenarios. However, the model of \cite{kosmerl} assumes that overlapping LAPs' coverage areas  is allowed by using inter-cell interference coordination (ICIC). However, ICIC requires further communications between LAPs. 

The main contribution of this paper is to investigate the optimal 3D deployment of multiple UAVs in order to maximize the downlink coverage performance while using a minimum transmit power. Given a target geographical area, the coverage requirements of the ground users, and a number of UAVs that use directional antennas, we develop a novel framework to determine the optimal 3D locations of the UAVs. First, we derive the downlink coverage probability for a UAV                                                                                                                                                                                                                                                                                                                                                                                                                                                                                                                                                                                                                                                                                                                                                                                                                                                                                                                                                                                                                                                                                                                                                                                                                                                                                                                                                                                                                                                                                                                                                                                                                                                                                                                                                                                                                                                                                                                                                                                                                                                                                                                                                                                                                                                                                                                                                                                                                                                                                                                                                                                                                                                                                                                                                                                                                                                                                                                                                                                                                                                                                                                                                                                                                                                                                                                                                                                                                                                                                                                                                                                                                                                                                                                                                                                                                                                                                                                                                                                                                                                                                                                                                                                                                                                                                                                                                                                                                                                                                                                                                                                                                                                                                                                                                                                                                                                                                                                                                                                                                                                                                                                                                                                                                                                                                                                                                                                                                                                                                                                                                                                                                                                                                                                                                                                                                                                                                                                                                                                                                                                                                                                                                                                                                                                                                                                                                                                                                                                                                                                                                                                                                                                                                                                                                                                                                                                                                                                                                                                                                                                                                                                                                                                                                                                          as a function of the UAV's altitude and the antenna gain. Next, using  circle packing theory \cite{gaspar}, we propose an efficient deployment method which leads to the maximum coverage performance while ensuring that the coverage areas of UAVs do not overlap. Our results show that, considering the size of the desired area, the number of
available UAVs and the gain (or beamwidth) of the directional antennas, the altitude and locations of the UAVs can be appropriately  adjusted for satisfying  the coverage requirements. In addition, our results reveal the minimum number of UAVs required to guarantee a target coverage for a given geographical area.         

The rest of this paper is organized as follows. In Section II, we present the system model and describe the air-to-ground channel model. Section III presents the coverage analysis and the proposed deployment method. In Section IV, we provide the simulation results, and Section V concludes the paper.\vspace{-0.25cm}

\section{System Model}

Consider a circular geographical area of radius $R_c$, as illustrated in Figure \ref{Model}, within which $M$ UAVs must be deployed to provide wireless coverage for ground users located within the area. In this model, we consider a stationary low altitude aerial platform such as quadrotor UAVs. The UAVs are assumed to be symmetric having the same transmit power and altitude. 
\textcolor{black}{We denote the UAV's directional antenna half beamwidth by $\theta_B$, and, thus, the antenna gain can be approximated by \cite{ven}:\vspace{-0.2cm}
\begin{equation}
	G = \left\{ \begin{array}{l}
		{G_{\text{3dB},}}\,\,\,\,\,\,\,\frac{{ - {\theta _B}}}{2} \le \varphi  \le \frac{{{\theta _B}}}{2},\\
		g(\varphi),\,\,\,\,\,\,\,\,\,\,\,\,\,\,{\text{ otherwise,}}
	\end{array} \right.\vspace{-0.1cm}
\end{equation}
where $\varphi$ is the sector angle, $G_{\text{3dB}} \approx \frac{{29000}}{{\theta _B^2}}$ with $\theta_B$ in degrees, is the main lobe gain \cite{balanis2016}. Also, $g(\varphi)$ is the antenna gain outside of the main lobe.
}
For the air-to-ground channel modeling, a common approach is to consider the LoS and non-line-of-sight (NLoS) links between the UAV and the ground users separately\cite{HouraniModeling}. 
 Each link has a specific probability of occurrence which depends on the elevation angle, environment, and relative location of the UAV and the users. Clearly, for NLoS links the shadowing and blockage loss is higher than the LoS links. Therefore, the received signal power from UAV $j$ at a user's location can be given by \cite{HouraniModeling}:\vspace{-0.15cm}
\begin{equation}\label{Pr}
{P_{r,j}}(dB) = \left\{\hspace{-0.16cm} \begin{array}{l}
{P_t} + {G_{\text{3dB}}} - {L_{dB}} - {\psi _\text{LoS},} \,\,\,\hspace{0.21cm}{\text{LoS link,}}\\
{P_t} + {G_{\text{3dB}}} - {L_{dB}} - {\psi _\text{NLoS},}\hspace{0.21cm}{\text{NLoS link,}}
\end{array} \right.
\end{equation}
where $P_{r,j}$ is the received signal power, $P_t$ is the UAV's transmit power, and $G_{\text{3dB}}$ is the UAV antenna gain in dB. Also, $L_{dB}$ is the path loss which for the air-to-ground communication is:\vspace{-0.1cm}
\begin{equation} 
{L_{dB}} = 10n\text{log} \left(\frac{{4\pi {f_c}d_j}}{c}\right),\vspace{-0.1cm} 
\end{equation}
where $f_c$ is the carrier frequency, $c$ is the speed of light, $d_j$ is the distance between UAV $j$ and a ground user, and $n\ge 2$ is the path loss exponent. Also, ${\psi _\text{LoS}}\sim N({\mu _\text{LoS}},\sigma _\text{LoS}^2)$ and ${\psi _\text{NLoS}}\sim N({\mu _\text{NLoS}},\sigma _\text{NLoS}^2)$ are shadow fading with normal distribution in dB scale for LoS and NLoS links. The mean and variance of the shadow fading for LoS and NLoS links are $({\mu _\text{LoS}},\sigma _\text{LoS}^2)$, and $({\mu _\text{NLoS}},\sigma _\text{NLoS}^2)$. As shown in \cite{HouraniModeling}, the variance depends on the elevation angle and type of the environment as follows:\vspace{-0.2cm}
\begin{align}
{\sigma _\text{LoS}}(\theta_j ) = {k_1}\exp ( - {k_2}\theta_j ),\\
{\sigma _\text{NLoS}}(\theta_j ) = {g_1}\exp ( - {g_2}\theta_j ),\vspace{-0.3cm}
\end{align}
where $\theta_j={\sin ^{- 1}}(h/d_j)$ is the elevation angle between the UAV and the user, $k_1$, $k_2$, $g_1$, and $g_2$ are constant values which depend on environment.
Finally the LoS probability is given by \cite{HouraniModeling}:\vspace{-0.3cm} 
\begin{equation}
{P_{\text{LoS},j}} = \alpha {\left( {\frac{{180}}{\pi }\theta_j  - 15} \right)^\gamma },\vspace{-0.1cm}
\end{equation}
where $\alpha$ and $\gamma$ are constant values reflecting the environment impact. Note that, the NLoS probability is $P_{\text{NLoS},j}=1-P_{\text{LoS},j}$. Next, using this air-to-ground channel model, we derive the downlink coverage probability and characterize an efficient scheme for deploying of multiple UAVs.\vspace{-0.3cm}  

\begin{figure}[!t]
	\begin{center}
		\vspace{-0.1cm}
		\includegraphics[width=6.8cm] {./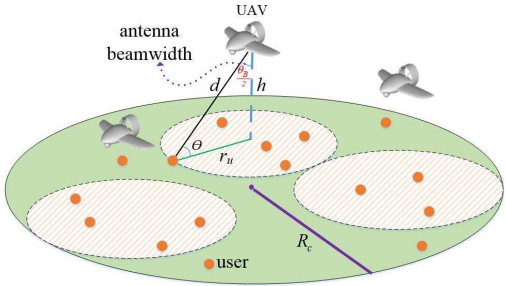}
		\vspace{-0.2cm}
		\caption{ \small System model.}
		\label{Model}
	\end{center}\vspace{-0.8cm}
\end{figure}

\section{Optimal Multi-UAV deployment}\vspace{-0.03cm}
\textcolor{black}{First, we find the coverage radius of each UAV in the presence of interference from other UAVs. To this end, the coverage probability of a single UAV needs to be derived. Then, we propose an efficient deployment strategy for $M$ UAVs that maximizes the total coverage performance while maximizing the coverage lifetime.}\vspace{-0.1cm}
\textcolor{black}{
\begin{theorem}\label{Pcov}
	\normalfont
	The coverage probability for a ground user, located at a distance $r\le h. \rm tan(\theta_B/2)$ from the projection of the UAV $j$ on the desired area, is given by:\vspace{-0.1cm}
	\begin{small}
	\begin{align}
	{P_{{\mathop{\rm cov}} }} &= {P_{\text {\rm LoS},j}}Q\left( {\frac{{{P_\textrm{min}} + {L_{dB}} - {P_t} - {G_{\text{3dB}}} + {\mu _\text{\rm LoS}}}}{{{\sigma _\text{\rm LoS}}}}} \right)\nonumber \\
	 &+ P_{\text {\rm NLoS},j}Q\left( {\frac{{{P_\textrm{min}} + {L_{dB}} - {P_t} - {G_{\text{3dB}}} + {\mu _\text{\rm NLoS}}}}{{{\sigma _\text{\rm NLoS}}}}} \right),
	\end{align}
	\end{small}where ${P_{\min }} = 10\log \left( {\beta N + \beta \bar I} \right)$ is the minimum received power requirement (in dB) for a successful detection, $N$ is the noise power, $\beta$ is the signal-to-interference-plus-noise-ratio (SINR) threshold. $\bar I$ is the mean interference power received from the nearest UAV $k$ which is given by:\\
	\begin{small} $\bar I \approx P_tg({\varphi _k})\left[ {{{10}^{\frac{{ - {\mu _{\text{LoS},k}}}}{{10}}}}{P_{\text{LoS},k}} + {{10}^{\frac{{ - {\mu _{\text{NLoS},k}}}}{{10}}}}{P_{\text{NLoS},k}}} \right]{\left( {\frac{{4\pi {f_c}{d_k}}}{c}} \right)^{ - n}}$.
		\end{small}
	  Also, $Q(.)$ is the $Q$ function.
\end{theorem}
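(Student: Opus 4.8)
The plan is to translate the coverage event --- namely, the downlink SINR at the ground user exceeding the threshold $\beta$ --- into a tail probability of the log-normal shadow fading, whose distribution is specified in \eqref{Pr}, and then to average over the random LoS/NLoS state of the link.

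First I would write the coverage condition as $\mathrm{SINR}=P_{r,j}/(N+I)\ge\beta$, where $I$ is the aggregate interference power received from the other UAVs, all quantities now being in linear scale. Because the circle-packing deployment used later separates the UAVs enough that the \emph{nearest} interferer dominates, I would approximate $I$ by the \emph{mean} interference $\bar I$ contributed by the nearest UAV $k$. Obtaining the stated closed form for $\bar I$ then amounts to: (i) observing that UAV $k$ is seen by the user outside its main lobe, so its gain is the out-of-lobe value $g(\varphi_k)$; (ii) inserting the free-space-type factor $(4\pi f_c d_k/c)^{-n}$ for the UAV-$k$-to-user link; and (iii) averaging the random shadowing term $10^{-\psi_k/10}$ over the LoS/NLoS split, using the approximation $\E[10^{-\psi_k/10}]\approx 10^{-\mu_k/10}$ for each link type and weighting by $P_{\mathrm{LoS},k}$ and $P_{\mathrm{NLoS},k}$. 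With this substitution the coverage condition becomes $P_{r,j}\ge\beta(N+\bar I)$, i.e.\ in dB, $P_{r,j}(\mathrm{dB})\ge P_{\min}$ with $P_{\min}=10\log(\beta N+\beta\bar I)$.

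Next I would invoke the hypothesis $r\le h\tan(\theta_B/2)$: geometrically this places the user inside the half-beamwidth cone of UAV $j$, so the relevant antenna gain in \eqref{Pr} is the main-lobe value $G_{\mathrm{3dB}}$. Conditioning on a LoS link, \eqref{Pr} gives $P_{r,j}(\mathrm{dB})=P_t+G_{\mathrm{3dB}}-L_{dB}-\psi_{\mathrm{LoS}}$ with $\psi_{\mathrm{LoS}}\sim N(\mu_{\mathrm{LoS}},\sigma_{\mathrm{LoS}}^2)$, so that
\[
\Pr\bigl[P_{r,j}(\mathrm{dB})\ge P_{\min}\mid\mathrm{LoS}\bigr]
=\Pr\bigl[\psi_{\mathrm{LoS}}\le P_t+G_{\mathrm{3dB}}-L_{dB}-P_{\min}\bigr]
=Q\!\left(\frac{P_{\min}+L_{dB}-P_t-G_{\mathrm{3dB}}+\mu_{\mathrm{LoS}}}{\sigma_{\mathrm{LoS}}}\right),
\]
where I use that for $X\sim N(\mu,\sigma^2)$ one has $\Pr[X\le x]=Q\bigl((\mu-x)/\sigma\bigr)$. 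Repeating the same computation verbatim with the NLoS parameters yields the second term. Finally, applying the law of total probability over the LoS and NLoS events, which occur with probabilities $P_{\mathrm{LoS},j}$ and $P_{\mathrm{NLoS},j}=1-P_{\mathrm{LoS},j}$, assembles the two conditional probabilities into the claimed expression for $P_{\mathrm{cov}}$.

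The only step that is not a mechanical manipulation is the interference handling: justifying that retaining the single nearest interferer and replacing the random $I$ by its mean $\bar I$ is an acceptable approximation in this deployment, and then carrying the expectation over the log-normal shadowing and the LoS/NLoS averaging correctly through to the stated closed form for $\bar I$. Everything else follows directly from the received-power model \eqref{Pr} and elementary properties of the Gaussian $Q$-function.
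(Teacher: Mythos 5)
Your proposal is correct and follows essentially the same route as the paper's proof: law of total probability over the LoS/NLoS states, conversion of the SINR condition to a dB threshold $P_{\min}$ with the random interference replaced by the mean contribution $\bar I$ of the nearest (dominant) interferer, and evaluation of the resulting Gaussian tail via the $Q$-function. Your treatment is in fact slightly more explicit than the paper's about where the approximations enter (nearest-interferer dominance, $\E[10^{-\psi/10}]\approx 10^{-\mu/10}$, and the role of $r\le h\tan(\theta_B/2)$ in selecting the main-lobe gain), but the argument is the same.
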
\vspace{-0.1cm}}\vspace{-0.7cm} 
\textcolor{black}{\begin{proof}
	 The downlink coverage probability for a ground user considering the mean value of interference between UAVs can be written as: \begin{small}		
	\begin{align} \label{Pcov2}
{P_{{\mathop{\rm cov}} }}&= \mathds{P}\left[\frac{{{P_{r,j}}}}{{N + \bar I}} \ge \beta \right] = \mathds{P}\left[ {{P_{r,j}}(dB) \ge {P_\textrm{min}}} \right] \nonumber\\
& = {P_{{\rm LoS},j}}\mathds{P}\left[ {{{P_{r,j}}(\text{LoS})} \ge {P_{\min }}} \right] + {P_{{\rm{NLoS}},j}}\mathds{P}\left[ {{{P_{r,j}}(\text{NLoS})} \ge {P_{\min }}} \right]\nonumber\\
&\mathop  = \limits^{(a)} {P_{{\rm LoS},j}}\mathds{P}\left[ {{\psi _{\rm LoS}} \le {P_t} + {G_{\text{3dB}}} - {P_\textrm{min}} - {L_{dB}}} \right] \nonumber \\
&+ {P_{{\rm NLoS},j}}\mathds{P}\left[ {{\psi _{\rm NLoS}} \le {P_t} + {G_{\text{3dB}}} - {P_{\textrm{min}}} - {L_{dB}}} \right]\nonumber\\
&\mathop  = \limits^{(b)} {P_{{\rm LoS},j}}Q\left( {\frac{{{P_\textrm{min}} + {L_{dB}} - {P_t} - {G_{\text{3dB}}} + {\mu _{\rm LoS}}}}{{{\sigma _{\rm LoS}}}}} \right)\nonumber\\
& + {P_{{\rm NLoS},j}}Q\left( {\frac{{{P_\textrm{min}} + {L_{dB}} - {P_t} - {G_{\text{3dB}}} + {\mu _{\rm NLoS}}}}{{{\sigma _{\rm NLoS}}}}} \right),
\end{align}	
\end{small}where $\mathds{P}[.]$ is the probability notation, and ${P_{\min }} = 10\log \left( {\beta N + \beta \bar I} \right)$. Clearly, due to the use of directional antennas, interference received from the nearest UAV $k$ is dominant. Hence, $\bar I$ can be written as:
 \begin{align}
 &\bar I \approx {P_{\text{LoS},k}}\mathds{E}\left[ {{P_{r,k}}(\text{LoS})} \right] + {P_{\text{NLoS},k}}\mathds{E}\left[ {{P_{r,k}}(\text{NLoS})} \right] \nonumber\\
 &=P_t g({\varphi _k})\left[ {{{10}^{\frac{{ - {\mu _\textrm{LoS}}}}{{10}}}}{P_{\text{LoS},k}} + {{10}^{\frac{{ - {\mu _{\textrm{NLoS}}}}}{{10}}}}{P_{\text{NLoS},k}}} \right]{\left( {\frac{{4\pi {f_c}{d_k}}}{c}} \right)^{-n}}.\nonumber
 \end{align}\vspace{-0.3cm}\\
   where $\mathds{E}[.]$ is the expected value over the received interference power. The mean interference is a reasonable approximation for the interference and leads to a tractable coverage probability expression. In (\ref{Pcov2}), $(a)$ is a direct result of (\ref{Pr}), and $(b)$ comes from the complementary cumulative distribution distribution function (CCDF) of a Gaussian random variable. 
Furthermore, $r\le h. \rm tan(\theta_B/2)$ implies that a user can be covered only if it is in the coverage range of a directional antenna with beamwidth $\theta_B$. This proves the theorem.
\end{proof} }\vspace{-0.2cm} 

\textcolor{black}{Note that, Theorem 1 provides the coverage probability for users located at any arbitrary range $r$.} From Theorem 1, it is observed that changing the UAV's altitude impacts the coverage by affecting several parameters including the distance between the UAV and users, LoS probability, and the feasible coverage radius ($r\le h. \rm tan(\frac{{{\theta_B}}}{2})$). For instance, increasing a UAV's altitude leads to a higher path loss and LoS probability, as well as a higher feasible coverage radius. \textcolor{black}{ Clearly, in the presence of interference, the UAVs need to increase their transmit power in order to meet the coverage requirements. Furthermore, as the number of UAVs increases, the distance between the UAVs decreases, and hence, the interference from the nearest UAV will increase.} 
The coverage radius of a UAV, $r_u$, is the maximum range within which the probability that users are covered by the UAV is greater than a specified threshold ($\epsilon$). Clearly, $r_u$ depends on the transmit power, antenna beamwidth, $\epsilon$, number of UAVs, and UAVs' locations. 
Thus, the coverage radius is given by:\vspace{-0.05cm}
\begin{equation}\label{ru}
{r_u} = \max \{ r|{P_{{\mathop{\rm cov}} }}(r,{P_t},{\theta _B}) \ge \varepsilon \}.\vspace{-0.2cm}
\end{equation}

Now, consider the geographical area of interest which should be covered by multiple UAVs. The UAVs must be placed in a way to maximize the total coverage, and to avoid any overlapping in their coverage areas. Furthermore, while maximizing the total coverage, each UAV must use a minimum transmit power in order to maximize the coverage lifetime. The coverage lifetime is defined as the maximum time that the UAVs can provide coverage for the given area. The coverage lifetime is maximized when the UAVs have the same transmit power (equal coverage radius). Therefore, assuming a circular coverage area for each UAV, the problem can be formulated as follows:\vspace{-0.25cm}
\begin{align}\label{opt}
&\mathop {(\vec r_j^*,{h^*},r_u^* )}\limits_{i \in \{ 1,...,M\} }  = \arg \max M.r_u^2,\\
{\rm{st.}}\,\,\,&{\rm{||}}{{\vec r}_j}{\rm{ - }}{{\vec r}_k}{\rm{||}} \ge 2{r_u},\,\,\, j \ne k \in \{ 1,...,M\},\label{over} \\
&{\rm{||}}{{\vec r}_j+r_u}{\rm{||}} \le {R_c},\label{Rc}\\
& r_u\le h. \rm tan(\theta_B/2),\vspace{-0.1cm}
\end{align}
where $M$ is the number of UAVs, $R_c$ is the radius of the desired geographical area, $\vec r_j$ is the vector location of UAV $j$ within the 2D plane of the desired area considering the center of the area as the origin, and $r_u$ is the maximum coverage radius of each UAV. Considering the optimization problem in (\ref{opt}), constraint (\ref{over}) ensures that no coverage overlap occurs, and (\ref{Rc}) guarantees that UAVs do not cover outside of the desired area. As a result, the potential interference between UAVs will be avoided, and also, users located outside the given area (undesired area) will not be affected by UAVs' transmissions.
   
Solving (\ref{opt}) is challenging due to the high number of unknowns and the nonlinear constraints. We model this problem by exploiting the so-called \emph{circle packing problem}  \cite{gaspar}. In the circle packing problem, $M$ circles should be arranged inside a given surface such that the packing density is maximized and none of the circles overlap. As an illustrative example, Figure \ref{Packing} shows the optimal packing of 3 equal circles inside a bigger circle. 
Also, Table I shows the radii of non-overlapping small circles which lead to the maximum packing density for a given circular area \cite{gaspar}. Clearly, the radius of each circle decreases as the number of circles increases. In Table I, the total coverage represents the maximum portion of the desired area which can be covered by multiple circles. \textcolor{black}{Note that, in general, the circle packing problem in a bounded area is known to be intractable \cite{gaspar}}. In particular, it is not possible to find  a general packing strategy that is optimal for any arbitrary $M$. Therefore, for each value of $M$ a specific packing strategy needs to be provided. As an example, we derive the optimal packing method for $M=3$. Consider a circular area with radius $R_c$. 
Clearly, the packing density is maximized if the maximum distance between the farthest circles is minimized. For $M=3$, all the circles tangent with each other if their centers are placed on the vertices of a equilateral triangle. Hence, Figure \ref{Packing} corresponds to the optimal placement. Then, from Figure \ref{Packing},
$x = \frac{{{r_u}}}{{\cos ({{30}^o})}}$ and,\\
 ${R_c} = {r_u} + x = {r_u}\left( {1 + \frac{2}{{\sqrt 3 }}} \right) \to {r_u} = \frac{{\sqrt 3 {R_c}}}{{2 + \sqrt 3 }} \approx 0.464{R_c}$.\vspace{0.1cm}
   
 In our model, each circle corresponds to the coverage region of each UAV, and maximizing packing density is related to maximizing the coverage area with non-overlapping smaller circles. Therefore, given the radius of the desired area and the number of symmetric UAVs, we can determine the required coverage radii of UAVs as well as their 3D locations which lead to the maximum coverage. 
Subsequently, based on the required coverage range of each UAV ($r_u$), the minimum transmit power of UAVs can be computed. 
Note that, the UAV's altitude should be adjusted based on the coverage radius and the antenna beamwidth by using $h = \frac{{{r_u}}}{{\tan ({\theta _B}/2)}}$.

Next, as a function of the number of UAVs, we derive an upper bound for the altitude which guarantees the non-overlapping condition between the UAVs' coverage regions.\vspace{-0.1cm}
 \begin{proposition}
 	\normalfont
 	Given $M$ UAVs, and $R_c$, the radius of the desired area, an upper bound for the maximum UAVs' altitude for which the coverage overlap does not occur, is given by: \vspace{-0.1cm} 
 	\begin{equation}
 	h \le \frac{{{q_m}{R_c}}}{{\left( {2 + {q_m}} \right)\tan ({\textstyle{{{\theta _B}} \over 2}})}},
 	\end{equation}
\textcolor{black}{ where $q_m$ is the maximum value of variable $q \in \mathds{R}$ that satisfies the following inequality:\\
 \small $ {{\frac{\pi }{{{{\sin }^{ - 1}}(q/2)}}\left( {\frac{{q\sqrt 3  + \sqrt {4 - {q^2}} }}{q}} \right) + \sqrt {12} (1 - M) \ge 0}}$.}
 \end{proposition}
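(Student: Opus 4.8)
The plan is to reduce the non‑overlap constraint to a pure circle‑packing question, reparametrize it so that the altitude becomes a monotone function of a single ratio $q$, and then derive the stated transcendental inequality as a \emph{necessary} condition for packing $M$ equal disks into a disk of radius $R_c$.

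\textbf{Step 1 (Reduction to equal‑disk packing).} By the last constraint of (\ref{opt}) and the identity $h = r_u/\tan(\theta_B/2)$, each UAV covers a ground disk of radius $r_u = h\tan(\theta_B/2)$, so $h$ is increasing in $r_u$, and the largest admissible altitude corresponds to the largest $r_u$ for which $M$ non‑overlapping disks of radius $r_u$ fit inside the disk of radius $R_c$ — precisely the circle‑packing problem of Table I. Writing $\rho := R_c - r_u$ for the radius of the disk that must contain all UAV projections (constraint (\ref{Rc})), the non‑overlap constraint (\ref{over}) says these $M$ points are pairwise at distance $\ge 2r_u$. Introduce $q := 2r_u/\rho = 2r_u/(R_c-r_u)$, so that $r_u = qR_c/(2+q)$ and $h = qR_c/\big((2+q)\tan(\theta_B/2)\big)$. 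Since $q\mapsto q/(2+q)$ is increasing on $(0,2]$, it is enough to produce an upper bound $q_m$ valid for every feasible $q$.

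\textbf{Step 2 (A ring‑counting packing bound).} Inflate the configuration to $M$ disjoint disks of radius $r_u$ around the UAV projections; they lie inside the concentric disk of radius $\rho+r_u = R_c$. Separating out one central disk, the remaining $M-1$ disks are organized into concentric rings about the origin $O$: a disk whose center is at distance $t\le\rho$ from $O$ subtends an angle $2\arcsin(r_u/t)\ge 2\arcsin(q/2)$ at $O$, so any single ring holds at most $\pi/\arcsin(q/2)$ of them, while, since consecutive rings are separated radially by at least the hexagonal row spacing $\sqrt3\,r_u$, at most $\big(q\sqrt3+\sqrt{4-q^2}\big)/(2\sqrt3\,q)$ rings fit inside radius $\rho$ (the bracket equals $4\sin(\arcsin(q/2)+\pi/6)$). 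Multiplying, $M-1\le \frac{\pi}{\arcsin(q/2)}\cdot\frac{q\sqrt3+\sqrt{4-q^2}}{\sqrt{12}\,q}$, which rearranges into $\frac{\pi}{\arcsin(q/2)}\big(\tfrac{q\sqrt3+\sqrt{4-q^2}}{q}\big)+\sqrt{12}(1-M)\ge 0$. Hence every feasible $q$ obeys this inequality, so $q\le q_m$ with $q_m$ its largest root, and Step 1 yields $h\le q_mR_c/\big((2+q_m)\tan(\theta_B/2)\big)$. (As a check, for $M=3$ the inequality is tight at $q=\sqrt3$, reproducing the exact value $r_u=\sqrt3R_c/(2+\sqrt3)\approx0.464R_c$ found above.)

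\textbf{Main obstacle.} The delicate part is the boundary bookkeeping of Step 2. The interior hexagonal‑density estimate (each interior inflated disk ``owns'' Voronoi area $\ge\sqrt{12}\,r_u^2$, by Thue/Fejes T\'oth) is classical, but pinning down the precise outer factor $\pi/\arcsin(q/2)$ and folding the ring count into the closed form $(q\sqrt3+\sqrt{4-q^2})/q$ requires a careful peeling argument with a verification that these relaxations lose no more than the stated inequality permits. One also has to confirm that the inequality cuts out an interval $q\in(0,q_m]$ — its left side tends to $+\infty$ as $q\to0^+$ because $\arcsin(q/2)\to0$, and it decreases through a single sign change — so that ``the maximum $q$ satisfying it'' is well defined; combined with the monotonicity of $q/(2+q)$ this then gives the claimed altitude bound.
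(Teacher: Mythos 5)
Your Step~1 is, in substance, the paper's entire proof. The authors quote from \cite{gaspar} the upper bound $D \le Mq_m^2/(2+q_m)^2$ on the density of packing $M$ equal circles in a circle (with $q_m$ defined by the stated transcendental inequality), combine it with $D = Mr_u^2/R_c^2$ to obtain $r_u \le q_m R_c/(2+q_m)$, and substitute $h = r_u/\tan(\theta_B/2)$. Your reparametrization $q = 2r_u/(R_c-r_u)$, the monotonicity of $q/(2+q)$, and the final substitution reproduce exactly this chain, and your check that the inequality is tight at $q=\sqrt3$ for $M=3$, recovering $r_u\approx 0.464R_c$, is correct.

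Where you diverge is Step~2: the paper does not prove the inequality defining $q_m$ at all --- it is imported wholesale from the packing literature --- whereas you attempt to derive it by a ring-counting argument, and that derivation has a genuine gap. First, multiplying ``at most $\pi/\arcsin(q/2)$ disks per ring'' by ``at most $(q\sqrt3+\sqrt{4-q^2})/(2\sqrt3\,q)$ rings'' is not a valid count when the ring number is fractional: at $q=\sqrt3$ your ring count is $2/3<1$, yet two non-central disks must still be placed, so the product gives the right number only because the two factors are not simultaneously attainable caps. Second, the separation estimate $\Delta\theta \ge 2\arcsin(r_u/t)$ holds for two centers at the \emph{same} radius $t$; for two centers at different radii inside one annulus of width $\sqrt3\,r_u$, two points at mutual distance $2r_u$ can have a strictly smaller angular gap, so the per-ring cap does not follow from what you wrote. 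You flag this as ``delicate boundary bookkeeping,'' but as described the peeling argument does not close; a self-contained proof would require reproducing the actual argument of \cite{gaspar} or substituting a weaker but rigorous bound (e.g., Oler's inequality gives $M \le \frac{2\pi}{\sqrt3 q^2}+\frac{\pi}{q}+1$, yielding a valid though looser $q_m$). For the purpose of matching the paper, stating the density bound as a cited lemma --- i.e., stopping at your Step~1 --- is all that is required.
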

 \begin{proof}
 The maximum packing density ($D$) that can be achieved for a circle, using $M$ equal smaller circles, is upper bounded by \cite{gaspar}:\vspace{-0.3cm}
 \begin{equation}
   D \le \frac{{Mq_m^2}}{{{{\left( {2 + {q_m}} \right)}^2}}}.\vspace{-0.2cm}
   \end{equation}
   Also, clearly, $D = \frac{{Mr_u^2}}{{R_c^2}}$, and hence, $\frac{{Mr_u^2}}{{R_c^2}} \le \frac{{Mq_m^2}}{{{{\left( {2 + {q_m}} \right)}^2}}}$.\\
   Finally, considering ${r_u} \le \frac{{{q_m}{R_c}}}{{\left( {2 + {q_m}} \right)}}$, and $\tan (\frac{{{\theta _B}}}{2}) = \frac{{{r_u}}}{h}$, we have $h \le \frac{{{q_m}{R_c}}}{{\left( {2 + {q_m}} \right)\tan ({\textstyle{{{\theta _B}} \over 2}})}}$.\vspace{-0.2cm}
 \end{proof}
 Proposition 2, provides a necessary condition on the UAVs' altitude for avoiding an overlapping coverage. \vspace{-0.3cm} 

\begin{table}[!t]
	\normalsize
	\begin{center}
	\caption{\small Covering a circular area with radius $R_c$ using identical UAVs- the \emph{circle packing in a circle}  approach.\vspace{-0.1cm}}
	\label{TableP}
	\resizebox{7.2cm}{!}{
	\begin{tabular}{|c|c|c|}
		\hline
		\textbf{Number of UAVs} & \textbf{Coverage radius of each UAV} & \textbf{Maximum total coverage} \\ \hline \hline
	1	&     $R_c$      &      1     \\ \hline 
	2	&     0.5$R_c$       &       0.5    \\ \hline
	3	&     0.464$R_c$      &      0.646     \\ \hline
	4	&     0.413$R_c$      &        0.686   \\ \hline
	5	&     0.370$R_c$     &        0.685   \\ \hline
	6	&     0.333$R_c$      &          0.666 \\ \hline
	7	&     0.333$R_c$    &         0.778  \\ \hline
	8	&     0.302$R_c$    &         0.733  \\ \hline
	9	&     0.275$R_c$     &         0.689  \\ \hline
	10	&     0.261$R_c$    &         0.687  \\ \hline
	\end{tabular}
}
\end{center}\vspace{-0.6cm}
\end{table}

 \begin{figure}[!t]
 	\begin{center}
 		\vspace{-0.1cm}
 		\includegraphics[width=3.5cm]{./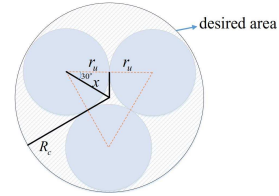}
 		\vspace{-0.1cm}
 		\caption{ \small Packing problem in a circle with 3 circles.\vspace{-0.1cm}}
 		\label{Packing}
 	\end{center}\vspace{-0.68cm}
 \end{figure}
 
\section{Simulation Results and Analysis}\vspace{-0.1cm}
In our simulations, we consider the UAV-based communications over 2\,GHz carrier frequency ($f_c=2\,\rm GHz$)  in an urban environment with $\alpha=0.6$, $\gamma=0.11$, $k_1=10.39$, $k_2=0.05$, $g_1=29.06$, $g_2=0.03$, $\mu_{\rm LoS}=1\,\rm dB$, $\mu_{\rm NLoS}=20\, \rm dB$, and $n=2.5$ \cite{HouraniModeling}. Moreover, using Theorem 1 and (\ref{ru}), the coverage radius of each UAV is computed based on $\epsilon=0.80$, $\beta=5$, and $N=-120$\,dBm. Furthermore, here, the maximum coverage time duration of the UAVs (coverage lifetime) is inversely proportional to the transmit power of UAVs. 

\textcolor{black}{Figure \ref{Pt} shows the total coverage and the coverage lifetime as a function of the number of UAVs ($M$) for $R_c=5000\,\rm m$, and $\theta_B=80^o$. In this figure, the maximum achievable coverage while maximizing the coverage lifetime is shown for different number of UAVs. Clearly, by increasing the number of UAVs, the coverage lifetime increases due to the decrease in the transmit power of each UAV. In Figure \ref{Pt}, for the given area with a radius 5000\,m, a single UAV has the maximum coverage performance. However, in this case, the single UAV has a minimum coverage lifetime. Therefore, depending on the size of the area, coverage and coverage lifetime requirements, an appropriate   number of UAVs needs to be deployed.} 


Figure \ref{h} illustrates the optimal UAVs' altitude versus the number of UAVs. Clearly, the altitude of UAVs should be decreased as the number of UAVs increases. For higher number of UAVs, to avoid overlapping between the coverage regions of the UAVs, the coverage radius of UAVs must be decreased by reducing their height according to $h = \frac{{{r_u}}}{{\tan ({\theta _B}/2)}}$. As shown, by doubling the number of UAVs from 3 to 6, the optimal altitude is reduced from 2000\,m  to 1300\,m. Furthermore, the optimal altitude is lower for higher antenna beamwidths.     

Figure \ref{Nu} shows the minimum required number of UAVs in order satisfy the coverage requirement of the given geographical area. In this figure, the coverage threshold corresponds to the minimum portion of the given area which needs to be covered by the UAVs. This result is based on $P_t=35\,\rm dBm$, $\theta_B=80^o$, and optimal altitudes subject to $h<$\,5000\,\rm m.  
 Interestingly, to satisfy at least 0.7 coverage requirement with a maximum coverage lifetime, either one UAV or more than 6 UAVs are required. In other words, for 1\,$<$$M$$<$\,7, the 0.7 coverage performance cannot be achieved. In general, as the size of the desired area increases, more UAVs are needed to meet the coverage requirement. Clearly, 
 for $R_c$$<$\,5400\,\rm m, using a single UAV can satisfy a 0.6 coverage threshold. However, for a larger target area, more UAVs must be used to reach the coverage threshold. Therefore, the optimal number of UAVs for an efficient system design is significantly dependent on the coverage requirement, and the size of geographical area.\vspace{-0.49cm}  

\begin{figure}[!t]
	\begin{center}
		\vspace{-0.3cm}
		\includegraphics[width=7.3cm]{./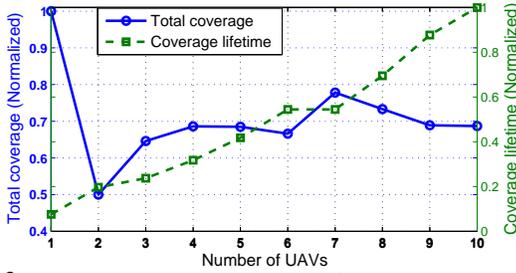}
		\vspace{-0.3cm}
		\caption{\small Total coverage and coverage lifetime versus number of UAVs for $R_c=5000$\,m.}
		\label{Pt}
	\end{center}\vspace{-0.83cm}
\end{figure}

\begin{figure}[!t]
	\begin{center}
		\vspace{-0.2cm}
		\includegraphics[width=7.2cm]{./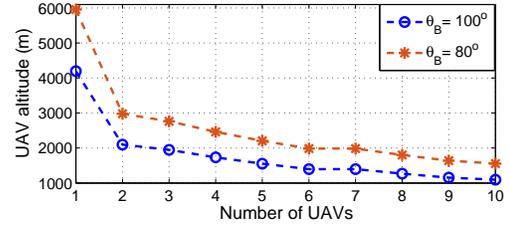}
		\vspace{-0.2cm}
		\caption{ \small UAV's altitude versus number of UAVs.\vspace{-0.1cm}}
		\label{h}
	\end{center}\vspace{-0.42cm}
\end{figure}

\begin{figure}[!t]
	\begin{center}
		\vspace{-0.2cm}
		\includegraphics[width=7.3cm]{./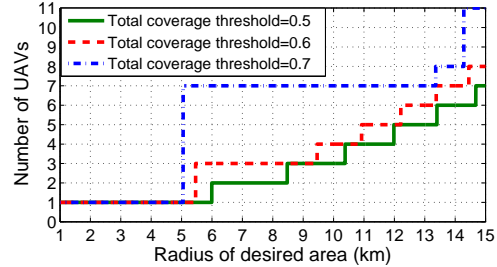}
		\vspace{-0.2cm}
		\caption{ \small Number of required UAVs versus radius of desired area.\vspace{-0.14cm}}
		\label{Nu}
	\end{center}\vspace{-0.70cm}
\end{figure}

\section{Conclusions}\vspace{-0.1cm}
\textcolor{black}{In this paper, we have studied the optimal deployment of multiple UAVs equipped with directional antennas used as aerial base stations. First, the downlink coverage probability was derived based on the probabilistic LoS/NLoS links and considering the shadow fading. Next, given a desired geographical area which needs to be covered by multiple UAVs, an efficient deployment approach was proposed based on the circle packing theory that leads to a maximum coverage while each UAV uses a minimum transmit power. The results have shown that, the optimal altitude and location of the UAVs can be determined based on the number of available UAVs and the gain/beamwidth of the directional antennas.  \vspace{-0.20cm}
}

\def\baselinestretch{1}
\bibliographystyle{IEEEtran}
\vspace{-0.3cm}
\bibliography{references}

\begin{thebibliography}{10}
\providecommand{\url}[1]{#1}
\csname url@samestyle\endcsname
\providecommand{\newblock}{\relax}
\providecommand{\bibinfo}[2]{#2}
\providecommand{\BIBentrySTDinterwordspacing}{\spaceskip=0pt\relax}
\providecommand{\BIBentryALTinterwordstretchfactor}{4}
\providecommand{\BIBentryALTinterwordspacing}{\spaceskip=\fontdimen2\font plus
\BIBentryALTinterwordstretchfactor\fontdimen3\font minus
  \fontdimen4\font\relax}
\providecommand{\BIBforeignlanguage}[2]{{%
\expandafter\ifx\csname l@#1\endcsname\relax
\typeout{** WARNING: IEEEtran.bst: No hyphenation pattern has been}%
\typeout{** loaded for the language `#1'. Using the pattern for}%
\typeout{** the default language instead.}%
\else
\language=\csname l@#1\endcsname
\fi
#2}}
\providecommand{\BIBdecl}{\relax}
\BIBdecl

\bibitem{Mozaffari}
M.~Mozaffari, W.~Saad, M.~Bennis, and M.~Debbah, ``Drone small cells in the
  clouds: Design, deployment and performance analysis,'' in \emph{Proc. of IEEE
  Global Communications Conference (GLOBECOM)}, San Diego, CA, USA, Dec. 2015.

\bibitem{de2013}
E.~P. De~Freitas, T.~Heimfarth, A.~Vinel, F.~R. Wagner, C.~E. Pereira, and
  T.~Larsson, ``Cooperation among wirelessly connected static and mobile sensor
  nodes for surveillance applications,'' \emph{Sensors}, vol.~13, no.~10, pp.
  12\,903--12\,928, 2013.

\bibitem{bekmezci}
I.~Bekmezci, O.~K. Sahingoz, and {\c{S}}.~Temel, ``Flying ad-hoc networks
  ({FANETs}): A survey,'' \emph{Ad Hoc Networks}, vol.~11, no.~3, pp.
  1254--1270, 2013.

\bibitem{mozaffari2}
M.~Mozaffari, W.~Saad, M.~Bennis, and M.~Debbah, ``{Unmanned aerial vehicle
  with underlaid device-to-device communications: performance and tradeoffs},''
  \emph{IEEE Transactions on Wireless Communications}, to appear, 2016.

\bibitem{orfanus}
D.~Orfanus, E.~P. de~Freitas, and F.~Eliassen, ``Self-organization as a
  supporting paradigm for military {UAV} relay networks,'' \emph{IEEE
  Communications Letters}, vol.~20, no.~4, pp. 804--807, 2016.

\bibitem{kosmerl}
J.~Kosmerl and A.~Vilhar, ``Base stations placement optimization in wireless
  networks for emergency communications,'' in \emph{Proc. of IEEE International
  Conference on Communications (ICC)}, Sydney, Australia, June. 2014.

\bibitem{gaspar}
Z.~G{\'a}sp{\'a}r and T.~Tarnai, ``Upper bound of density for packing of equal
  circles in special domains in the plane,'' \emph{Periodica Polytechnica Civil
  Engineering}, vol.~44, no.~1, pp. 13--32, 2000.

\bibitem{ven}
K.~Venugopal, M.~C. Valenti, and R.~W. Heath~Jr, ``Device-to-device millimeter
  wave communications: Interference, coverage, rate, and finite topologies,''
  \emph{available online: arxiv.org/abs/:1506.07158}, 2015.

\bibitem{balanis2016}
C.~A. Balanis, \emph{Antenna theory: analysis and design}.\hskip 1em plus 0.5em
  minus 0.4em\relax John Wiley \& Sons, 2016.

\bibitem{HouraniModeling}
A.~Hourani, S.~Kandeepan, and A.~Jamalipour, ``Modeling air-to-ground path loss
  for low altitude platforms in urban environments,'' in \emph{Proc. of IEEE
  Global Communications Conference (GLOBECOM)}, Austin, TX, USA, Dec. 2014.

\end{thebibliography}
\vspace{-0.9cm}
\end{document}